\numberwithin{equation}{section}
\numberwithin{figure}{section}
\theoremstyle{definition}
\newtheorem*{defn*}{\protect\definitionname}
\theoremstyle{remark}
\newtheorem*{rem*}{\protect\remarkname}
\theoremstyle{plain}
\newtheorem{thm}{\protect\theoremname}
\theoremstyle{definition}
\newtheorem{defn}[thm]{\protect\definitionname}
\theoremstyle{remark}
\newtheorem{rem}[thm]{\protect\remarkname}
\theoremstyle{plain}
\newtheorem{lem}[thm]{\protect\lemmaname}
\providecommand{\definitionname}{Definition}
\providecommand{\lemmaname}{Lemma}
\providecommand{\remarkname}{Remark}
\providecommand{\theoremname}{Theorem}
\begin{document}
\title{{\small{}Celestial Mechanics Solutions where the Future is a Perfect
Reflection of the Past}}
\author{{\small{}Ali Abdulhussein$^{1}$ and Harry Gingold$^{2}$}}
\address{{\small{}Department of Mathematics , WVU, Morgantown WV 26506}}
\email{aaabdulhussein@mix.wvu.edu{\small{}$^{1}$ }, gingold@math.wvu.edu{\small{}$^{2}$}}
\begin{abstract}
Newton's equations of celestial mechanics are shown to possess a continuum
of solutions in which the future trajectories of the $N$ bodies are
a perfect reflection of their past. These solutions evolve from zero
initial velocities of the $N$ bodies. Consequently, the future gravitational
forces acting on the $N$ bodies are also a perfect reflection of
their past. The proof is carried out via Taylor series expansions.
A perturbed system of equations of the $N$ body problem is also considered.
All real valued solutions of this perturbed system have no singularities
on the real line. The perturbed system is shown to have a continuum
of solutions that possess symmetry where the future velocities of
the $N$ bodies are a perfect reflection of their past. The positions
and accelerations of the $N$ bodies are then odd functions of the
time. All $N$ bodies then evolve from one location in space.
\end{abstract}

\keywords{Celestial Mechanics; Gravitation; Even solutions; Odd solutions; Taylor
Series; Holomorphic; Analytic}
\subjclass[2000]{Primary 37N05; Secondary 70F15}
\maketitle

\section{{\small{}Introduction.}}

{\small{}Solutions that possess symmetry are of great interest in
celestial mechanics. Periodic motion of celestial bodies may be considered
a form of symmetry that is central to our understanding of the solar
system. }Given a periodic function $y(t)$ with a period $P>0\thinspace,n\in\mathbb{N}$,
then the following identity in $t$ is a manifestation of symmetry
\begin{equation}
y(t-nP)\equiv y(t+nP).\label{eq:PERIODICTYIDENTITY}
\end{equation}
{\small{} demonstration of the predictive powers of Newtons' equations
of celestial mechanics is the existence of solutions that are conic
sections that lie in one plane. This, for the Kepler two body problem.
}E.g. \cite{POLARDCELESTIAL-1} . Even functions and odd functions
respectively, are also a manifestation of symmetry. Namely, 
\begin{equation}
y(t)=y(-t),\,-y(t)=y(-t).\label{eq:EVENODDsymmetry}
\end{equation}

{\small{}In this atrticle we look for celestial mechanics solutions
that possess symmetries other than periodicity. We answer in the affirmative
the question whether or not the $N$ body problem }possesses solutions
in which the future is a perfect reflection of the past. This is so
iff the initial velocities of all $N$ bodies is zero. Consequently,
the future gravitational forces acting on the $N$ bodies are also
a perfect reflection of their past. The derivation of these conclusions
point to another interesting aspect of the predictive powers of Newtons'
equations of celectial mechanics. We don't know for sure what is the
past the present and the future of the universe. Are there astronomically
remote subsystems of point massess that approximately possess symmetries
that Newton's equations predict? This work is also a contribution
to our interest in past present and future of the universe. Compare
with \cite{Brumberg-1} . A perturbed system of equations of the $N$
body problem is also considered. All real valued solutions of this
perturbed system have no singularities on the real line. The perturbed
system is shown to have a continuum of solutions that possess symmetry
where the future velocities of the $N$ bodies are a perfect reflection
of their past. The positions and accelerations of the $N$ bodies
are then odd functions of the time. This is so iff the $N$ bodies
evolve from one location in space. We provide a proof via Taylor series
expansions. This method is tangible to scientists. The method of proof
guarantees the existence of real analytic even solutions to the $N$
body problem. It also guarantees the existence of even and odd analytic
solutions to an approximate model of the equations. For analytic solutions
of differential systems compare e.g. with \cite{HILLEODESCOMPLEX-1,KRANTZRealAnalytic,P.F.Hsieh=000026Y. Sibuya-2}. 

We prove two theorems that apply to a significant list of second order
nonlinear vectorial autonomous differential equations. The list includes
centarl force problems, like the Manev problem, and the Pendudlum
equation. Compare e.g. with \cite{Boyce E.W. and DiPrima R.C-1,BRAUERNOHEL-1,DIACUISENTROPIC-1,GuckenheimerHolmes-1,JORDAN=000026SMITH-1,L. Perko-1,MANEVDIACUII-1,MANEVI-1}.
They could be also useful to numerical approximations of solutions
and to phase space analysis . Given a scalar ordinary differential
equation $y''=f(y)$, let $(y,y')$ be the phase plane. Then, the
even solutions orbits intercept the $y$ axis and the odd solutions
orbits, if any, intercept the $y'$ axis. It is not easy to construct
and visualize global phase diagram for a $2n$ dimensions space in.
However, analiticity as proven in here could provide guidance to some
local phase space analysis for every pair of coordinate $(y_{j},y_{j}')$.
The order of presentations in this article is as follows. In section
2 we discuss preliminary notations and conventions and we motivate
an approximation model to the $N$ body problem equations. In section
3 we prove one of the main theorems in this article about the existance
of even solutions to the celestial mechanics solutions. In section
4 we provide a discussion about the meaning of $-f(y)=f(-y)$ and
we provide a lemma on the symmetry of partial derivatives of $f(y)$
. These, are necessary for the proof of our second main theorem on
the existence of odd solutions to an approximation model to the $N$
body problem equations. In section 5 we provide a lemma that presents
two successive even order derivatives of the components of $y$ .In
section 6 we provided a proof that the approximation model possesses
a continuum of odd solutions.

\section{Preliminary notations and conventions and an approximation model}

Some of the notation below is motivated by the necessity to formulate
initial value problems for the celestial mechanics equations that
are complex valued and exists in disks of the complex plane $t$.
However, as will be seen in the sequel, the initial point and the
initial positions and velocities are required to be real valued. The
following notation is used, $m,r,s,j,k,l,\lambda,s_{l},z_{l},k_{l},N\in\mathbb{N}_{0}$,
$\mathbb{N}_{0}$ is the set of nonnegative integers :$\thinspace m_{1},m_{2},\ldots,m_{N}$
are the masses of the $N$ bodies; $\thinspace t\in\mathbb{C}$ is
the time variable; $y_{j}\in\mathbb{\mathbb{C}}^{3}$ where $1\leq j\leq N$,
are column position vectors of the $N$ bodies, respectively; $T$
stands for transposition of a vector or a matrix; $y^{T}=[y_{1,}y_{2,},\ldots,y_{N}]$
and $f(y)^{T}:=[f_{1}(y),f_{2}(y),\ldots,f_{N}(y)]$ are respectively
rows of blocks of column vectors; $\overline{y}_{j}$ is the complex
conjugate of $y_{j}$ ; $y_{k}(t_{0}),\,y_{j}(t_{0})\in\mathbb{R}^{3},\;y_{k}(t_{0})\neq y_{j}(t_{0}),\;k\neq j,\;k,j=1,2,\ldots,N.$
Let 
\begin{equation}
D_{\epsilon}(t_{0}):=\{t\,\bigr|\left|t-t_{0}\right|\leq\epsilon,\;\epsilon>0,\;t_{0}\in\mathbb{R},t\in\mathbb{\mathbb{C}}\}.\label{eq:SmallDisk}
\end{equation}
We also adopt the following definition.
\begin{defn*}
Given $A$ an $m$ by $n$ matrix and $y$ an $n$ by 1 column vector
with elements in $\mathbb{C}$. A norm denoted by $\left|.\right|$
is called algebraic if it satifies the following inequality
\begin{equation}
\bigl|Ay\bigr|\leq\bigl|A\bigr|\bigl|y\bigr|.\label{eq:ALGEBRAICNORM}
\end{equation}
The notation $\left\Vert y\right\Vert $ that normally stands for
$\left\Vert y\right\Vert =[y_{j}^{T}\overline{y}_{j}]^{\frac{1}{2}}$that
is the Euclidean norm for $y$ complex , is replaced in here by the
unconventional use

\begin{equation}
\left\Vert y_{j}\right\Vert :=[y_{j}^{T}y_{j}]^{\frac{1}{2}},j=1,2,\ldots,N,\:y_{j}\in\mathbb{\mathbb{C}}^{3}.\label{eq:EuclideanNorm}
\end{equation}
\end{defn*}
\begin{rem*}
The unconventional use of notation (\ref{eq:EuclideanNorm}) requires
justification. We desire to prove the existence of analytic solutions
to the celestial mechanics equations with (\ref{eq:f(y)NOTATION})
, (\ref{eq:NBODY2NDORDERVEC}), and with (\ref{eq:fk(y)hat}), (\ref{eq:IVPAPPROXIMATENBODY}).
The convention (\ref{eq:EuclideanNorm}) together with the real values
of the initial point and the initial values guarantees that $f_{k}(y)$
and $\widehat{f_{k}}(y)$ are analytic vector functions of $y$ in
some disk in $\mathbb{\mathbb{C}}^{3N}$. More specifically, the denominators
in (\ref{eq:f(y)NOTATION}) and in (\ref{eq:fk(y)hat}) stay analytic
in some disk in $\mathbb{\mathbb{C}}^{3N}$ and are bounded away from
zero.
\end{rem*}
Put,
\begin{equation}
f_{k}(y):=\sum_{j=1,j\neq k}^{N}\frac{Gm_{j}(y_{j}-y_{k})}{\|y_{k}-y_{j}\|^{3}},\;1\leq k\leq N.\label{eq:f(y)NOTATION}
\end{equation}
Denote $\nicefrac{dy_{k}}{dt}=y_{k}'$ , $\nicefrac{d^{2}y_{k}}{dt^{2}}=y_{k}'',\:\nicefrac{d^{l}y_{k}}{dt^{l}}=y_{k}^{(l)},etc$,
$t_{0}\in\mathbb{R},y_{0},\eta\in\mathbb{\mathbb{\mathbb{R}}}^{3N}$.
Then, the initial value problem for the $N$ bodies is

\begin{equation}
y''=f(y),\;y(t_{0})=y_{0},y'(t_{0})=\eta,\;y_{k}(t_{0})\neq y_{j}(t_{0}),\;k\neq j,\;k,j=1,2,\ldots,N.\label{eq:NBODY2NDORDERVEC}
\end{equation}
{\small{}Observe that Newtons equations of celestial mechanics satisfy
\begin{equation}
-f_{k}(-y)=-\sum_{j\ne k}\frac{Gm_{j}(-y_{j}+y_{k})}{\|y_{j}-y_{k}\|^{3}}=\sum_{j\ne k}\frac{Gm_{j}(y_{j}-y_{k})}{\|y_{j}-y_{k}\|^{3}}\Longrightarrow-f(y)=f(-y).\label{eq:NEWTON-f(y)=00003Df(-y).}
\end{equation}
However, we cannot solve for odd solutions the initial value problem
(\ref{eq:NBODY2NDORDERVEC}) with a condition $y(t_{0})=\overrightarrow{0}$
. This, because $y(t_{0})=\overrightarrow{0}$ means that all of the
celestial point masses are in state of mutual collision. Then, the
initial value problem (\ref{eq:NBODY2NDORDERVEC}) contains undetermined
and unbounded terms which render the equations invalid. Therefore,
we consider a modified celestial mechanics system of equations with
$\epsilon(j,k)$ small. Namely,
\begin{equation}
\widehat{f}_{k}(y):=\sum_{j=1,j\neq k}^{N}\frac{Gm_{j}(y_{j}-y_{k})}{[\|y_{k}-y_{j}\|+\epsilon(j,k)]^{3}},\;1\leq j,k\leq N,\;\epsilon(j,k)=\epsilon(k,j)>0.\label{eq:fk(y)hat}
\end{equation}
}It is easily verified that $-\widehat{f_{k}}(y)=\widehat{f_{k}}(-y)$
as well. Consequently, with 
\begin{equation}
-\widehat{f}(y)^{T}:=[\widehat{f}_{1}(y),\widehat{f}_{2}(y),\ldots,\widehat{f}_{N}(y)]^{T}=\widehat{f}(-y)^{T},\label{eq:-fHat(y)=00003DfHat(-y).}
\end{equation}
we obtain an initial value problem for an approximated equation 
\begin{equation}
y''=\widehat{f}(y),\;y(t_{0})=y_{0},y'(t_{0})=\eta,\;y_{0},\eta\in\mathbb{\mathbb{\mathbb{R}}}^{3N}.\label{eq:IVPAPPROXIMATENBODY}
\end{equation}
Formally, we also have that the limits of all $\widehat{f}_{k}(y)$
with all $\epsilon(j,k)\rightarrow0^{+}$ are respectively $f_{k}(y)$
of Newton's equations (\ref{eq:f(y)NOTATION}). This initial value
problem (\ref{eq:IVPAPPROXIMATENBODY}) can be solved for any $y_{0},\eta\in\mathbb{\mathbb{\mathbb{R}}}^{3N}$.
Notice that 
\[
y_{0},\eta\in\mathbb{\mathbb{\mathbb{R}}}^{3N},\;t,t_{0}\in\mathbb{R}\Longrightarrow y(t)\in\mathbb{\mathbb{\mathbb{R}}}^{3N}.
\]
Also observe that for $y(t)\in\mathbb{\mathbb{\mathbb{R}}}^{3N}$,
a solution to the initial value problem (\ref{eq:IVPAPPROXIMATENBODY})
, we have 
\begin{equation}
\left\Vert y''(t)\right\Vert =\left\Vert \widehat{f}(y)\right\Vert \leq\left\Vert \sum_{j=1,j\neq k}^{N}\frac{Gm_{j}(y_{j}-y_{k})}{[\|y_{k}-y_{j}\|+\epsilon(j,k)]^{3}}\right\Vert \leqq NG[\max(m_{j})]\max(\frac{1}{[\epsilon(j,k)]^{2}}).\label{eq:BOUNDONy''DERAPPROX}
\end{equation}
 $Max(m_{j})$ and $\max(\nicefrac{1}{[\epsilon(j,k)]^{2}})$ are
taken over all $j=1,2,\cdots,N.$ Consequently, all solutions of (\ref{eq:IVPAPPROXIMATENBODY})
exist on $(-\infty,\infty)$ . Thus, the system of equations $y''=\widehat{f}(y)$
is (so called) complete. However, solutions of (\ref{eq:IVPAPPROXIMATENBODY})
as analytic functions of $t$ , could develop singularities in the
complex plane.
\begin{rem*}
The formulation of the two main theorems in the sequel assume $y\,,f(y)\in\mathbb{C}^{n},\;n\in\mathbb{N}$,
that are somewhat more general than the $f(y)$ and $\widehat{f}(y)$
discussed above for which $n=3N$ was restricted. 
\end{rem*}

\section{Formulation and proof of main theorem 1 by induction on odd derivatives}
\begin{thm}
Assume that: i) $t\in\mathbb{C},\:t_{0}\in\mathbb{R},y_{0},\eta\in\mathbb{\mathbb{\mathbb{R}}}^{3N},y\,,f(y)\in\mathbb{C}^{n},\;n\in\mathbb{N}$,
where $f(y)$ is an analytic function in the vector variable $y$
in a disk such that in $D$
\begin{equation}
D:=\{y\bigl|\left\Vert y-y_{0}\right\Vert \leq b\}\Longrightarrow\left\Vert f(y)\right\Vert \leq M.\label{eq:Boundedf(y)in a disk.}
\end{equation}
 Then, the initial value problem

\begin{equation}
y''=f(y),\;y(t_{0})=y_{0},y'(t_{0})=\overrightarrow{0},\;\overrightarrow{0}^{T}:=[0,0,\cdots,0],\label{eq:INITIALVALUEPROBGENERAL}
\end{equation}
possesses a unique analytic solution $y(t)$ for $\left|t-t_{0}\right|\leq\sqrt{\nicefrac{2b}{M}}$
that satisfies $y(t-t_{0})\equiv y(-(t-t_{0}))$. Namely, $y^{(m)}(t_{0})=\overrightarrow{0}$
for all odd numbers m.
\end{thm}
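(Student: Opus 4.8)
The plan is to prove existence and uniqueness of the analytic solution by the classical majorant (Cauchy) method, and then to prove the reflection symmetry $y(t-t_0)\equiv y(-(t-t_0))$ separately by induction on the Taylor coefficients at $t_0$. First I would reduce to $t_0=0$ by translation (the equation is autonomous, so nothing is lost). I would then recast the second-order system $y''=f(y)$, $y(0)=y_0$, $y'(0)=\overrightarrow{0}$ as the formal power series $y(t)=\sum_{m\ge 0} c_m t^m$ with $c_0=y_0$ and $c_1=\overrightarrow{0}$, and observe that substituting into $y''=f(y)$ determines every $c_m$ recursively: $(m+2)(m+1)c_{m+2}$ equals the $m$-th Taylor coefficient of $f(y(t))$, which depends only on $c_0,\dots,c_m$ through the analytic composition. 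This gives a unique formal solution; the bound $\|f(y)\|\le M$ on the disk $\|y-y_0\|\le b$ feeds the standard majorant argument (compare the solution against the one for $Y''=M/(1-(Y-y_0)/b)$ or a similar scalar majorant, integrated twice) to show the series converges for $|t|\le\sqrt{2b/M}$ and stays inside $D$, so the formal solution is a genuine analytic solution. Uniqueness follows because any analytic solution must have the Taylor coefficients dictated by the same recursion.

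The core new content is the parity statement, and I would establish it by strong induction on $m$, proving $c_m=\overrightarrow{0}$ for every odd $m$; equivalently $y^{(m)}(t_0)=\overrightarrow{0}$ for odd $m$. The base case is $c_1=y'(t_0)=\overrightarrow{0}$, given. For the inductive step, suppose $c_1=c_3=\dots=c_{2k-1}=\overrightarrow{0}$; I want $c_{2k+1}=\overrightarrow{0}$. From the recursion, $(2k+1)(2k)c_{2k+1}$ is the coefficient of $t^{2k-1}$ in the Taylor expansion of $f(y(t))$ about $t_0$. The key observation is that under the inductive hypothesis the partial sum of $y(t)$ through order $2k-1$ is an even polynomial in $(t-t_0)$, so $y(t)$ agrees with an even function up to order $2k-1$; since $f$ is analytic, $f(y(t))$ then agrees up to order $2k-1$ with $f$ evaluated at an even function, hence its Taylor expansion has no nonzero coefficient at the odd order $2k-1$ — that coefficient is $\overrightarrow{0}$, forcing $c_{2k+1}=\overrightarrow{0}$. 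Phrased more concretely: the $(2k-1)$-st derivative of $f(y(t))$ at $t_0$ is, by the chain rule / Faà di Bruno, a sum of terms each of which is a product of a partial derivative of $f$ with derivatives $y^{(i_1)}(t_0)\cdots y^{(i_r)}(t_0)$ where $i_1+\dots+i_r=2k-1$; since the $i_\ell$ sum to an odd number, at least one $i_\ell$ must be odd and $\le 2k-1$, so each such term vanishes by the inductive hypothesis.

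I expect the main obstacle to be making the "each multi-index term contains at least one odd part $\le 2k-1$" argument fully rigorous without drowning in Faà di Bruno bookkeeping for a vector-valued composition. Two clean ways around this: (1) use the substitution $t\mapsto -t$ directly — if $y(t)$ solves the IVP then so does $y(-t)$ because $y''$ is even under $t\mapsto -t$ and $f$ does not involve $t$, and $y(-t)$ has the same initial data $y(0)=y_0$, $(y(-t))'|_0=-y'(0)=\overrightarrow{0}$; by the uniqueness already proved, $y(-t)\equiv y(t)$, which is exactly the asserted identity, and equating Taylor coefficients gives $y^{(m)}(t_0)=\overrightarrow{0}$ for odd $m$. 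Or (2) keep the induction but organize it as: the truncation $p_{2k-1}(t):=\sum_{j=0}^{k-1}c_{2j}(t-t_0)^{2j}$ is even about $t_0$, $f(p_{2k-1}(t))$ is therefore even about $t_0$ up to an error of order $\ge 2k$, and matching the order-$(2k-1)$ coefficient on both sides of $y''=f(y)$ yields $c_{2k+1}=\overrightarrow{0}$. I would present route (1) as the streamlined proof and relegate route (2) to a remark, since the paper's stated intent is a "tangible" Taylor-series argument; either way the analytic-solution machinery from the majorant step is what legitimizes equating coefficients.
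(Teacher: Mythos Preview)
Your proposal is correct. The core idea---show that all odd-order Taylor coefficients at $t_0$ vanish---is the same as the paper's, but the organization differs substantially. The paper carries out a structural induction: it writes each odd-order derivative $y_j^{(2+m)}(t)$ (for $m$ odd) as a finite sum of terms of the form $\bigl(\partial^{l}f_j/\partial y_{k_l}\cdots\partial y_{k_1}\bigr)\,J_O\,J_E$, where $J_O$ is a product of an \emph{odd} number $r\ge 1$ of odd-order derivative factors and $J_E$ is a product of even-order derivative factors; it then differentiates this expression twice and checks, through a lengthy case-by-case analysis of the seven product types that arise, that every resulting term again has an odd number of odd-order factors, hence vanishes at $t=0$. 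Your Fa\`a di Bruno argument reaches the same conclusion more directly via the parity observation $i_1+\cdots+i_r=2k-1\Rightarrow$ some $i_\ell$ is odd, without maintaining the stronger invariant ``odd number of odd factors'' or doing the two-step differentiation. Your route (1), using $t\mapsto -t$ together with uniqueness of the analytic solution, does not appear in the paper at all and is the shortest path; the paper's explicit combinatorics, by contrast, yields concrete formulas for successive derivatives (which the authors advertise as tangible and potentially useful for numerical and phase-plane purposes), at the cost of considerably more bookkeeping.
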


\begin{proof}
Without loss of generality assume that $t_{0}=0$ since our differential
system is autonomous. Below are two successive odd order derivatives.
Compare also with \cite{ABROGASTDERCOMPOSITESCALA-1,diBRUNO-1,ENCIASSHORTPROOFBRUNO}.

{\small{}
\begin{equation}
\frac{d^{3}y_{j}(t)}{dt^{3}}=\sum_{k_{1}=1}^{n}\frac{\partial f_{j}(y(t))}{\partial y_{k_{1}}}\frac{dy_{k_{1}}(t)}{dt}\thinspace;j=1,2,\ldots,n.\label{eq:3RDERyj}
\end{equation}
}{\small\par}

{\small{}
\[
\frac{d^{5}y_{j}(t)}{dt^{5}}=\sum_{k_{1}=1}^{n}\frac{\partial f_{j}(y(t))}{\partial y_{k_{1}}}\frac{d^{3}y_{k_{1}}(t)}{dt^{3}}+\sum_{k_{1}=1}^{n}\sum_{k_{2}=1}^{n}\frac{\partial^{2}f_{j}(y(t))}{\partial y_{k_{2}}\partial y_{k_{1}}}[2\frac{dy_{k_{2}}(t)}{dt}\frac{d^{2}y_{k_{1}}(t)}{dt^{2}}+\frac{d^{2}y_{k_{2}}(t)}{dt^{2}}\frac{dy_{k_{1}}(t)}{dt}]
\]
 
\begin{equation}
+\sum_{k_{1}=1}^{n}\sum_{k_{2}=1}^{n}\sum_{k_{3}=1}^{n}\frac{\partial^{3}f_{j}(y(t))}{\partial y_{k_{3}}\partial y_{k_{2}}\partial y_{k_{1}}}\frac{dy_{k_{3}}(t)}{dt}\frac{dy_{k_{2}}(t)}{dt}\frac{dy_{k_{1}}(t)}{dt}.\label{eq:5THDERyj}
\end{equation}
}It is easily verified that with $y^{T}=[y_{1,}y_{2,},\ldots,y_{j},\ldots,y_{n}]^{T}$
and with $y_{j}(t)$ scalars, we get from (\ref{eq:3RDERyj}) and
(\ref{eq:5THDERyj}) 
\begin{equation}
y'(0)=\overrightarrow{0}\Longrightarrow y^{(3)}(0)=y^{(5)}(0)=\overrightarrow{0}.\label{eq:DeR3rd5thzero}
\end{equation}
These particular cases (\ref{eq:3RDERyj}) and (\ref{eq:5THDERyj})
indicate what should be the general form of higher odd order derivatives
of $y_{j}(t)$ . Moreover, they demonstrate how the property of zero
initial velocities $y_{j}^{(1)}(0)=0$ is inherited by subsequent
derivatives of odd order. In what follows we may suppress the notation
$(t)$ in $y(t),y_{j}^{(\lambda)}(t)$ etc, when clarity is not compromised.
Assume that each component $y_{j}^{(2+m)}(t)$, $m$ odd, $j=1,2,\cdots,n$
, is a finite sum of products of the form:
\begin{equation}
T_{m}:=\frac{\partial^{l}f_{j}(y(t))}{\partial y_{k_{l}}\ldots\partial y_{k_{2}}\partial y_{k_{1}}}J_{O}J_{E},\label{eq:ODDmDERIVATIVE}
\end{equation}
i) $J_{O}$ is a finite product of an odd number $r\in\mathbb{N}_{0}$
of odd order derivatives of certain components of $y(t)$. Namely,
\begin{equation}
J_{O}=y_{s_{1}}^{(2e_{`1}+1)}(t)y_{s_{2}}^{(2e_{2}+1)}(t)\ldots y_{s_{r}}^{(2e_{r}+1)}(t),\qquad e_{1},e_{2},\ldots,e_{r}\in\mathbb{N}_{0},\:r\geq1.\label{eq:ODDDERIVATIVES=000026ODDNUMBER}
\end{equation}
ii) $J_{E}$ is a finite product of any number $w\in\mathbb{N}_{0}$
of even order derivatives of components of $y(t)$. Namely,
\begin{equation}
J_{E}=y_{z_{1}}^{(2c_{1})}(t)y_{z_{2}}^{(2c_{2})}(t)\ldots y_{z_{w}}^{(2c_{w})}(t),\qquad c_{1},c_{2},\ldots,c_{w}\in\mathbb{N}_{0}.\label{eq:ANYNUMBEREVENDERIVATIVES}
\end{equation}
If $w=0$, we put $J_{E}\equiv1$. Then, $y_{j}^{(4+m)}(t)$ is a
finite sum of certain products of the form

\begin{equation}
\widehat{T_{m}^{(2)}}:=\frac{\partial^{s}f_{j}(y(t))}{\partial y_{k_{s}}\ldots\partial y_{k_{2}}\partial y_{k_{1}}}\widehat{J}_{O}\widehat{J}_{E},\label{eq:ODDmDERIVATIVE+2}
\end{equation}
where $\widehat{J}_{O}$ is a finite product of an odd number $p\in\mathbb{N}$
of odd order derivatives of components of $y(t)$. Namely,
\begin{equation}
\widehat{J}_{O}=y_{s_{1}}^{(2g_{1}+1)}(t)y_{s_{2}}^{(2g_{2}+1)}(t)\ldots y_{s_{p}}^{(2g_{p}+1)}(t),\quad g_{1},g_{2},\ldots,g_{p}\in\mathbb{N}_{0},\:p\geq1,\label{eq:ODDDERIVATIVES=000026ODDNUMhat}
\end{equation}
and $\widehat{J}_{E}$ is a finite product of any number $q\in\mathbb{N}_{0}$
of even order derivatives of components of $y(t)$ . Namely,
\begin{equation}
\widehat{J}_{E}=y_{z_{1}}^{(2c_{1})}(t)y_{z_{2}}^{(2c_{2})}(t)\ldots y_{z_{q}}^{(2c_{q})}(t),\qquad c_{1},c_{2},\ldots,c_{q},q\in\mathbb{N}_{0}.\label{eq:ANYNUMBEREVENDERIVATIVES-1}
\end{equation}
Evidently, the induction hypothesis holds for the derivatives $y^{(1)}(0)=y^{(3)}(0)=y^{(5)}(0)=\overrightarrow{0}$.
It is easily verified that (\ref{eq:3RDERyj}) and (\ref{eq:5THDERyj})
are sums of products of the form (\ref{eq:ODDmDERIVATIVE}). The crux
of the induction is to show that $\widehat{J}_{O}$ and $\widehat{J}_{E}$
respectively, are of the desired form (\ref{eq:ODDDERIVATIVES=000026ODDNUMhat})
and (\ref{eq:ANYNUMBEREVENDERIVATIVES-1}) for any $m$ odd. To this
end differentiate twice both sides of (\ref{eq:ODDmDERIVATIVE}) .
This leads to the relation $\widehat{T_{m}^{(2)}}=Q_{1}+Q_{2}+Q_{3}$
\,where
\[
Q_{1}:=\sum_{k_{l+2}=1}^{n}\sum_{k_{l+1}=1}^{n}\{[\frac{\partial^{l+2}f_{j}(y(t))}{\partial y_{k_{l+2}}\partial y_{k_{l+1}}\partial y_{k_{l}}\ldots\partial y_{k_{2}}\partial y_{k_{1}}}]\}y_{k_{l+2}}^{(1)}y_{k_{l+1}}^{(1)}J_{O}J_{E}
\]
\begin{equation}
+\sum_{k_{l+1}=1}^{n}[\frac{\partial^{l+1}f_{j}(y(t))}{\partial y_{k_{l+1}}\partial y_{k_{l}}\ldots\partial y_{k_{2}}\partial y_{k_{1}}}]y_{k_{l+1}}^{(2)}J_{O}J_{E},\label{eq:INDUCTIVE1PART1}
\end{equation}
\begin{equation}
Q_{2}:=2\sum_{k_{l+1}=1}^{n}[\frac{\partial^{l+1}f_{j}(\overrightarrow{y}(t))}{\partial y_{k_{l+1}}\partial y_{k_{l}}\ldots\partial y_{k_{2}}\partial y_{k_{1}}}]y_{k_{l+1}}^{(1)}[J_{O}^{(1)}J_{E}+J_{O}J_{E}^{(1)}],\label{eq:INDUCTIVE1PART2}
\end{equation}
\begin{equation}
Q_{3}:=\frac{\partial^{l}f_{j}(y(t))}{\partial y_{k_{l}}\ldots\partial y_{k_{2}}\partial y_{k_{1}}}[J_{O}J_{E}]^{(2)}=\frac{\partial^{l}f_{j}(y(t))}{\partial y_{k_{l}}\ldots\partial y_{k_{2}}\partial y_{k_{1}}}[J_{O}^{(2)}J_{E}+2J_{O}^{(1)}J_{E}^{(1)}+J_{O}J_{E}^{(2)}].\label{eq:INDUCTIVE1PART3}
\end{equation}
We view $Q_{1},Q_{2}$ and $Q_{3}$ as sums of certain products and
we show that they are of the form (\ref{eq:ODDmDERIVATIVE+2}) subject
to (\ref{eq:ODDDERIVATIVES=000026ODDNUMhat}) and (\ref{eq:ANYNUMBEREVENDERIVATIVES-1}).
It is useful to keep in mind that $J_{O}$ is a product of an odd
number $r\geq1$ of odd order derivatives. The list of these terms
is:
\begin{equation}
y_{k_{l+2}}^{(1)}y_{k_{l+1}}^{(1)}J_{O}J_{E},\:y_{k_{l+1}}^{(2)}J_{O}J_{E},\;y_{k_{l+1}}^{(1)}J_{O}^{(1)}J_{E},\;y_{k_{l+1}}^{(1)}J_{O}J_{E}^{(1)},\;J_{O}^{(1)}J_{E}^{(1)},\;J_{O}^{(2)}J_{E},\;J_{O}J_{E}^{(2)}.\label{eq:ALLTERMS4INDUNTIONm+2}
\end{equation}
For the products originating from $y_{k_{l+2}}^{(1)}y_{k_{l+1}}^{(1)}J_{O}J_{E}$
put $\widehat{J}_{O}=y_{k_{l+2}}^{(1)}y_{k_{l+1}}^{(1)}J_{O},\:\widehat{J}=J_{E}.$
For the products originating from $y_{k_{l+1}}^{(2)}J_{O}J_{E}$ put
$\widehat{J}_{O}=J_{O},\:\widehat{J}_{O}=y_{k_{l+1}}^{(2)}J_{E}.$
Consider now $y_{k_{l+1}}^{(1)}J_{O}^{(1)}J_{E}$. Evidently, $J_{O}^{(1)}$
is a sum of $r$ products as follows. 
\begin{equation}
J_{O}^{(1)}=\sum_{j=1}^{r}y_{s_{j}}^{(2e_{j}+2)}\prod_{l\neq j,l=1}^{r}y_{s_{l}}^{(2e_{l}+1)}.\label{eq:JO1rstDERI}
\end{equation}
Each product in (\ref{eq:JO1rstDERI}) has $(r-1)$ odd order derivatives
factors and precisely one factor that is an even order derivative
of a certain component of $y_{s_{j}}$. For a representative product
in $y_{k_{l+1}}^{(1)}J_{O}^{(1)}J_{E}$ put

\begin{equation}
\widehat{J}_{E}=y_{s_{j}}^{(2e_{j}+2)}J_{E},\:r=1\Rightarrow\widehat{J}_{O}=y_{k_{l+1}}^{(1)},\;r=2\Rightarrow\widehat{J}_{O}=y_{k_{l+1}}^{(1)}\prod_{l\neq j,l=1}^{r}y_{s_{l}}^{(2e_{l}+1)}.\label{eq:JOHATJEHAT}
\end{equation}
Consider now a product originating from $y_{k_{l+1}}^{(1)}J_{O}J_{E}^{(1)}$.
If $w=0$ namely $J_{E}\equiv1$ , then $J_{E}^{(1)}\equiv0$ and
the product $y_{k_{l+1}}^{(1)}J_{O}J_{E}^{(1)}\equiv0$. If $w=1,$
put $\widehat{J}_{O}=y_{k_{l+1}}^{(1)}y_{z_{1}}^{(2c_{1}+1)}J_{O},\:\widehat{J}_{E}\equiv1$.
Let $w\geq2$. Then, $J_{E}^{(1)}$ is the following sum of $w$ products.

\begin{equation}
J_{E}^{(1)}=\sum_{j=1}^{w}y_{z_{j}}^{(2c_{j}+1)}\prod_{l\neq j,l=1}^{w}y_{z_{l}}^{(2c_{l})}.\label{eq:JE1rstDER}
\end{equation}
Put,

\begin{equation}
\widehat{J}_{O}=y_{k_{l+1}}^{(1)}y_{z_{j}}^{(2c_{j}+1)}J_{O},\;\widehat{J}_{E}=\prod_{l\neq j,l=1}^{w}y_{z_{l}}^{(2c_{l})}.\label{eq:2NDTERMDERGREEKJODERJE}
\end{equation}
Consider now the products emanating from $J_{O}^{(1)}J_{E}^{(1)}$.
If $J_{E}\equiv1$ namely $w=0$ , then $J_{O}^{(1)}J_{E}^{(1)}\equiv0$.
If $r=1$ and $w=1$ put $\widehat{J}_{O}=y_{z_{1}}^{(2c_{1}+1)},\;\widehat{J}_{E}=y_{s_{1}}^{(2e_{1}+2)}$.
If $r\geq2$ and $w=1$ put

\begin{equation}
\widehat{J}_{O}=y_{z_{1}}^{(2c_{1}+1)}\prod_{l\neq j,l=1}^{r}y_{s_{l}}^{(2e_{l}+1)},\;\widehat{J}_{E}\equiv1.\label{eq:HATSDERJOr>1DERJEw=00003D1}
\end{equation}
If $r\geq2$ and $w\geq2$ we have by (\ref{eq:JO1rstDERI}) and (\ref{eq:JE1rstDER})
$rw$ products in $J_{O}^{(1)}J_{E}^{(1)}$ . Put 
\[
\widehat{J}_{O}=y_{z_{j}}^{(2c_{j}+1)}[\prod_{l\neq j,l=1}^{r}y_{s_{l}}^{(2e_{l}+1)}],\;\widehat{J}_{E}=y_{s_{j}}^{(2e_{j}+2)}[\prod_{l\neq j,l=1}^{w}y_{z_{l}}^{(2c_{l})}].
\]
It remains to consider the factors originating from $J_{O}^{(2)}J_{E}$
and from $J_{O}J_{E}^{(2)}$. To this end we first calculate the sum
of products emanating from $J_{O}^{(2)}$ and multiply them by $J_{E}$.
If $J_{O}$ has only $r=1$ factors then $J_{O}^{(2)}=y_{s_{1}}^{(2e_{1}+3)}$
. Then put $\widehat{J}_{O}=y_{s_{1}}^{(2e_{1}+3)}$ and $\widehat{J}_{E}=J_{E}.$
If $r\geq3$ then the factors in $J_{O}^{(2)}$ are of \,two kinds.
The one is
\begin{equation}
y_{s_{j}}^{(2e_{j}+2)}y_{s_{k}}^{(2e_{k}+2)}\prod_{l\neq j,k,l=1}^{r}y_{s_{l}}^{(2e_{l}+1)},r\geq3.\label{eq:JODER2products1}
\end{equation}
Then, put 
\[
\widehat{J}_{O}=\prod_{l\neq j,k,l=1}^{r}y_{s_{l}}^{(2e_{l}+1)},\quad\widehat{J}_{E}:=y_{s_{j}}^{(2e_{j}+2)}y_{s_{k}}^{(2e_{k}+2)}J_{E},r\geq3.
\]
The other type of product in $J_{O}^{(2)}$ is $y_{s_{j}}^{(2e_{j}+3)}\prod_{l\neq j,l=1}^{r}y_{s_{l}}^{(2e_{l}+1)}.$
Put
\[
\widehat{J}_{O}:=y_{s_{j}}^{(2e_{j}+3)}\prod_{l\neq j,l=1}^{r}y_{s_{l}}^{(2e_{l}+1)},\quad\widehat{J}_{E}:=J_{E},\quad if\qquad r\geq3.
\]
It remains to analyze the resulting products in $J_{O}J_{E}^{(2)}$.
To this end calculate first the resulting products in $J_{E}^{(2)}$
. If $w=0$ then $J_{E}^{(2)}\equiv0$ and then the contribution of
products from $J_{O}J_{E}^{(2)}$ is $0$. If $w=1$ then $J_{E}^{(2)}\equiv y_{z_{1}}^{(2c_{1}+2)}$
and we put $\widehat{J}_{O}=J_{O},\quad\widehat{J}_{E}=y_{z_{1}}^{(2c_{1}+2)}$.
Assume that $w\geq2$. Then $J_{E}^{(2)}$ is a sum of products that
are of two kinds. The first kind is

\begin{equation}
y_{z_{j}}^{(2c_{1}+2)}\prod_{l\neq j,l=1}^{w}y_{z_{l}}^{(2c_{l})}.\label{eq:JOJODer2}
\end{equation}
Then put 
\begin{equation}
\widehat{J}_{O}=J_{O},\;\widehat{J}_{E}=y_{z_{j}}^{(2c_{1}+2)}\prod_{l\neq j,l=1}^{w}y_{z_{l}}^{(2c_{l})}.\label{eq:JODer2JEALPHA=00003D2-1}
\end{equation}
The second kind is precisely $y_{z_{j}}^{(2c_{j}+1)}y_{z_{k}}^{(2c_{k}+1)}$
if $w=2.$ Then put

\begin{equation}
\widehat{J}_{O}\equiv y_{z_{j}}^{(2c_{j}+1)}y_{z_{k}}^{(2c_{k}+1)}J_{O},\quad\widehat{J}_{E}\equiv1.\label{eq:JEAlphaj,k=00003D1,1w=00003D2-2}
\end{equation}
If $w\geq3$ then the second kind of a product resulting from the
sum of products in $J_{E}^{(2)}$ is
\begin{equation}
y_{z_{j}}^{(2c_{j}+1)}y_{z_{k}}^{(2c_{k}+1)}\prod_{l\neq j,k,l=1}^{w}y_{z_{l}}^{(2c_{l})}.\label{eq:JOAlphaj=00003D1Alphak=00003D1-2-1}
\end{equation}
Then put

\begin{equation}
\widehat{J}_{O}:=y_{z_{j}}^{(2c_{j}+1)}y_{z_{k}}^{(2c_{k}+1)}J_{O},\quad\widehat{J}_{E}:=\prod_{l\neq j,k,l=1}^{w}y_{z_{l}}^{(2c_{l})}.\label{eq:JEAlphaj,k=00003D1andw>2-2}
\end{equation}
In sum, all $T_{m}(0)=0$ imply that all $\widehat{T_{m}^{(2)}}(0)=0$.
Consequently, $y^{(m)}(0)=\overrightarrow{0}$ for odd $m$$\in\mathbb{N}_{0}$
. Q.E.D.
\end{proof}
\begin{rem*}
It is evident that the symmetry $y(t)\equiv y(-t)$ is manifest to
the future being a perfect reflection of the past. We show inhere
that all odd order derivatives, in the Taylor series expansion of
$y(t)$, vanish at $t=0$. The estimate $\left|t\right|\leq\sqrt{\nicefrac{2b}{M}}$
follows by adaptations of the technicians in e.g. \cite{P.F.Hsieh=000026Y. Sibuya-2}
, Chapter 1, pages 20. Compare also with \cite{HILLEODESCOMPLEX-1}.
The Taylor series then show that $y(t)=y(0)+\sum_{l=1}^{n}\nicefrac{[y^{(2l)}(0)}{(2l)!}]t^{2l}$
satisfy $y(t)\equiv y(-t)$. Since $y''=f(y)$ is an autonomous system,
then for any $t_{0}\in\mathbb{C}$, $y(t)=y(t_{0})+\sum_{l=1}^{n}[\nicefrac{y^{(2l)}(t_{0})}{(2l)!}](t-t_{0})^{2l}$
are also solutions of $y''=f(y)$. Furthermore, the velocities are
symmetric functions as well. Namely $-y'(t)\equiv y'(-t)$. The future
accelerations and forces acting on the bodies, are a perfect reflection
of their past. Let $t_{c}\in\mathbb{\mathbb{R}}$ be a real valued
collision time, where $y_{k}(t_{c})=y_{j}(t_{c})$ for some $\;k\neq j$.
Allowing the variable $t$ to be complex valued could make it possible
to analytically continue a solution $y(t)$ from the real line into
the complex plane, from time $t<t_{c}$ to $t>t_{c}$. Then, $y(t)\equiv y(-t)$
holds for $t<t_{c}$ as well as for $t>t_{c}$ . Thus, circumventing
a collision at time $t=t_{c}$.
\end{rem*}

\section{a lemma on partial derivatives of $f(y)$}

We clarify now what is an even and an odd function of a scalar function
$u=$$H(y_{1},y_{2},\ldots,y_{N})$ of several variables. To this
end we denote the transposed column vector $y^{T}=(y_{1},y_{2},\ldots,y_{N})$
and we put $u=$$H(y_{1},y_{2},\ldots,y_{N})=H(y)$.
\begin{defn}
Denote by $REG$ an open connected set in $\mathbb{R}^{N}$ . We say
that $H(y)$ is an even function of $y$ in $REG$ if
\begin{equation}
H(y)=H(-y),\:y\in REG.\label{eq:EVEN(VECTORy)}
\end{equation}
We say that $H(y)$ is an odd function of $y$ in $REG$ if
\end{defn}

\begin{equation}
-H(y)=H(-y),\:y\in REG.\label{eq:ODD(VECTORy)}
\end{equation}
This definition is different than requiring that $u=$$H(y_{1},y_{2},\ldots,y_{N})=H(y)$
be an even or an odd function in each individual coordinate $y_{j}$
. In order to bring out the difference we add the following.
\begin{defn}
Denote by $REG$ an open connected set in $\mathbb{R}^{N}$ . We say
that $H(y)$ is an even function of $y$ in $REG$ in the strict sense
if
\begin{equation}
H(y_{1},y_{2},\ldots,y_{j},\ldots,y_{N})=H(y_{1},y_{2},\ldots,-y_{j},\ldots,y_{N}),\:y\in REG,\:j=1,2,\ldots,N.\label{eq:EVEN(VECTORy)-1}
\end{equation}
We say that $H(y)$ is an odd function of $y$ in $REG$ in the stricter
sense if

\begin{equation}
-H(y_{1},y_{2},\ldots,y_{j},\ldots,y_{N})=H(y_{1},y_{2},\ldots,-y_{j},\ldots,y_{N}),\:y\in REG,\:j=1,2,\ldots,N.\label{eq:ODD(VECTORy)-1}
\end{equation}
Consider the following functions 
\begin{equation}
H(y_{1},y_{2}):=y_{1}^{5}y_{2}^{3},\:L(y_{1},y_{2})=y_{1}^{10}y_{2}^{6}.\label{eq:EXAMPLE1MULTIVARIABLES}
\end{equation}
 Evidently, $H(y_{1},y_{2}):=y_{1}^{5}y_{2}^{3}$ is an even function
in $REG:=\mathbb{R}^{2}$ . However, it is an odd function in the
strict sense in $REG:=\mathbb{R}^{2}$. Evidently, $L(y_{1},y_{2})=y_{1}^{10}y_{2}^{6}$
is an even function in $REG:=\mathbb{R}^{2}$ and it is also an even
function in the strict sense in $REG:=\mathbb{R}^{2}$.
\end{defn}

\begin{rem}
The reader may want to consider a multinomial in the $(r+w)$ independent
variables $y_{1},y_{2},$$\ldots,y_{j},\ldots,y_{r},y_{r+1},y_{r+2},\ldots,y_{r+w}$.
\begin{equation}
H(y)=[y_{1}^{[2e_{1}+1]}y_{2}^{[2e_{2}+1]}\ldots y_{j}^{[2e_{j}+1]}\ldots y_{r}^{[2e_{r}+1]}][y_{r+1}^{[2c_{1}]}y_{r+2}^{[2c_{2}]}\ldots y_{r+w}^{[2c_{w}]}]\label{eq:MULTINOMIAL}
\end{equation}
where $e_{1},e_{2},\ldots e_{r},c_{1},c_{2},\ldots,c_{w}$$\in\mathbb{N}_{0},\;,r,w\in\mathbb{N}\;$
. Formulation of necessary and sufficient conditions on the powers
occurring in $H(y)$ such that a) $H(y)$ is an even multivariate
function b) $H(y)$ is an even multivariate function in the strict
sense c) $H(y)$ is an odd multivariate function d) $H(y)$ is an
odd multivariate function in the strict sense, could further clarify
the difference between these two types of symmetry. Next we formulate
an analog to Lemma 1 for multivariate functions.
\end{rem}

\begin{lem}
Let $H(y)\in C^{1}(REG)$. i) Assume that $H(y)=H(-y),$$\:y\in REG$.
Then the partial derivatives
\[
\varPsi_{j}(y):=\frac{\partial H(y)}{\partial y_{j}},\:j=1,2,\cdots,N,
\]
are odd function in $REG$. ii) Assume that $-H(y)=H(-y),$$\:y\in REG$.
Then the partial derivatives
\[
\varPsi_{j}(y):=\frac{\partial H(y)}{\partial y_{j}},\:j=1,2,\cdots,N,
\]
are even functions in $REG$. iii) Assume $f(y)$ to be a column vector
function, $f^{T}(y):=[f_{1}(y),f_{2}(y),\ldots,f_{n}(y)]$ , where
$f_{j}(y),\:j=1,2,\ldots,n$ are the scalar component of $f(y)$ such
that $f_{j}(y)\in C^{1}(REG)$. Then,
\begin{equation}
f(y)=f(-y),\:y\in REG\Longrightarrow\varPsi(y):=\frac{\partial f(y)}{\partial y_{j}}=-\varPsi(-y):=-\frac{\partial f(-y)}{\partial y_{j}},\:j=1,2,\ldots,N.\label{eq:DERVECTORFUNCAREODDif=00005Bf=00005Deven}
\end{equation}
Moreover,

\begin{equation}
-f(y)=f(-y),\:y\in REG\Longrightarrow\varPsi(y):=\frac{\partial f(y)}{\partial y_{j}}=\varPsi(-y):=\frac{\partial f(-y)}{\partial y_{j}},\:j=1,2,\ldots,N.\label{eq:DERVECFUNCTIONSAREEVENIF=00005Bf=00005DisODD}
\end{equation}
We first prove i) and we focus on the quotient below: with $y\in REG$
; with $h\neq0$ ; and with $h$ arbitrarily small.
\begin{equation}
Q_{j}(y,h):=\frac{H(y_{1},y_{2},\ldots,y_{j-1},y_{j}+h,y_{j+1},\ldots,y_{N})-H(y_{1},y_{2},\ldots,y_{j-1},y_{j},y_{j+1},\ldots,y_{N})}{h}.\label{eq:Quotient1-2}
\end{equation}
Put as short hand notation
\begin{equation}
\widehat{H}(y_{j}+h):=H(y_{1},y_{2},\ldots,y_{j-1},y_{j}+h,y_{j+1},\ldots,y_{N}),\label{eq:H(y_j+h)Abbreviation}
\end{equation}
\begin{equation}
\widehat{H}(-y_{j}-h):=H(-y_{1},-y_{2},\ldots,-y_{j-1},-y_{j}-h,-y_{j+1},\ldots,-y_{N}).\label{eq:H(y_j+h)Abbreviation-1}
\end{equation}
\end{lem}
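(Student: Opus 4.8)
The plan is to prove the lemma by a direct difference-quotient argument, exploiting the chain rule together with the even/odd hypothesis on $H$. Starting from $Q_j(y,h)$ as defined in \eqref{eq:Quotient1-2}, the idea is to evaluate the analogous quotient at $-y$ and relate it back to $Q_j(y,h)$ using $H(y)=H(-y)$ (for part i)) or $-H(y)=H(-y)$ (for part ii)). Concretely, using the shorthand \eqref{eq:H(y_j+h)Abbreviation} and \eqref{eq:H(y_j+h)Abbreviation-1}, one writes
\[
Q_j(-y,h)=\frac{\widehat{H}(-y_j+h)-\widehat{H}(-y_j)}{h},
\]
and then, substituting $h\mapsto -h$ and invoking the evenness $\widehat{H}(-y_j-h)=\widehat{H}(y_j+h)$ (which is exactly what \eqref{eq:EVEN(VECTORy)} gives when only the $j$th slot is perturbed), one finds $Q_j(-y,-h)=-Q_j(y,h)$. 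Letting $h\to 0$ (legitimate since $H\in C^1(REG)$, so both one-sided limits exist and agree with the partial derivative) yields $\Psi_j(-y)=-\Psi_j(y)$, which is part i). Part ii) is identical except the extra overall sign from $-H(y)=H(-y)$ propagates through, giving $\Psi_j(-y)=\Psi_j(y)$.

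Next I would deduce iii) as an immediate corollary: since a vector function $f$ satisfies $f(y)=f(-y)$ iff each scalar component $f_j$ does, part i) applied componentwise to each $f_i$ gives $\partial f_i(y)/\partial y_j = -\partial f_i(-y)/\partial y_j$ for every $i$, hence $\partial f(y)/\partial y_j = -\partial f(-y)/\partial y_j$ as vectors, which is \eqref{eq:DERVECTORFUNCAREODDif=00005Bf=00005Deven}. The same reasoning with part ii) gives \eqref{eq:DERVECFUNCTIONSAREEVENIF=00005Bf=00005DisODD}. One should be a little careful with the notation in the statement — the paper writes $\Psi(y):=\partial f(y)/\partial y_j$ without an index on $\Psi$ even though it depends on $j$; I would either read $\Psi=\Psi_j$ throughout or silently fix this, but the mathematical content is unaffected.

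The one genuinely delicate point — the main obstacle — is justifying the interchange of the limit $h\to 0$ with the sign manipulations, i.e. making sure the argument does not secretly assume differentiability of something it hasn't got. The clean way is to observe that the hypothesis $H\in C^1(REG)$ means $\Psi_j(y)=\lim_{h\to 0}Q_j(y,h)$ exists for every $y\in REG$, and $REG$ being open and symmetric is not assumed, so strictly one should either add the hypothesis that $REG$ is symmetric about the origin (so that $-y\in REG$ whenever $y\in REG$, and a whole neighbourhood of $-y$ lies in $REG$ so the quotient at $-y$ makes sense) or restrict attention to $REG\cap(-REG)$. Granting that, the limit computation is routine: $\Psi_j(-y)=\lim_{h\to 0}Q_j(-y,h)=\lim_{h\to 0}Q_j(-y,-h)=\lim_{h\to 0}\bigl(-Q_j(y,h)\bigr)=-\Psi_j(y)$, where the middle equality is just a change of the dummy variable $h$ in the limit and the third is the algebraic identity established above. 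I would state this symmetry assumption on $REG$ explicitly at the start of the proof to keep everything rigorous.
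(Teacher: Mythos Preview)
Your proposal is correct and follows essentially the same approach as the paper: both arguments manipulate the difference quotient $Q_j(y,h)$, use the even/odd hypothesis on $H$ to obtain the identity $Q_j(y,h)=\mp Q_j(-y,-h)$, pass to the limit $h\to 0$, and then deduce iii) componentwise from i) and ii). Your added remarks about the implicit need for $REG$ to be symmetric about the origin and about the indexing of $\Psi$ are legitimate points of rigor that the paper does not address, but they do not change the method.
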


\begin{proof}
Since $H(y)$ is an even function then 
\begin{equation}
H(y_{1},y_{2},\ldots,y_{j-1},y_{j}+h,y_{j+1},\ldots,y_{N})=H(-y_{1},-y_{2},\ldots,-y_{j-1},-y_{j}-h,-y_{j+1},\ldots,-y_{N}).\label{eq:QUOTIENTCHNGEDBYSYMMETRY1-2}
\end{equation}
\begin{equation}
H(y_{1},y_{2},\ldots,y_{j-1},y_{j},y_{j+1},\ldots,y_{N})=H(-y_{1},-y_{2},\ldots,-y_{j-1},-y_{j},-y_{j+1},\ldots,-y_{N}).\label{eq:QUOTIENTCHNGEDBYSYMMETRY1-2 1}
\end{equation}
Substitute from (\ref{eq:QUOTIENTCHNGEDBYSYMMETRY1-2}) and (\ref{eq:H(y_j+h)Abbreviation})
and (\ref{eq:H(y_j+h)Abbreviation-1}) into the right hand side of
(\ref{eq:Quotient1-2}) to obtain 
\begin{equation}
Q_{j}(y,h)=-Q_{j}(-y,-h)=-\frac{\widehat{H}(-y_{j}-h)-\widehat{H}(-y_{j})}{-h}.\label{eq:RELATIONQ=00003D-Q-2}
\end{equation}
 Take the limit as $h\rightarrow0$ in (\ref{eq:RELATIONQ=00003D-Q-2})
and obtain

\begin{equation}
\varPsi(y):=\frac{\partial H(y)}{\partial y_{j}}=lim_{h\rightarrow0}Q_{j}(y,h)=-lim_{h\rightarrow0}Q_{j}(-y,-h)=-\varPsi(-y).
\end{equation}
Next we prove ii). We focus again on the quotient 
\begin{equation}
Q_{j}(y,h):=\frac{H(y_{1},y_{2},\ldots,y_{j-1},y_{j}+h,y_{j+1},\ldots,y_{N})-H(y_{1},y_{2},\ldots,y_{j-1},y_{j},y_{j+1},\ldots,y_{N})}{h}.\label{eq:Quotient1-1-1}
\end{equation}
Since $H(y)$ is an odd function then

\[
H(y_{1},y_{2},\ldots,y_{j-1},y_{j}+h,y_{j+1},\ldots,y_{N})
\]
\begin{equation}
=-H(-y_{1},-y_{2},\ldots,-y_{j-1},-y_{j}-h,-y_{j+1},\ldots,-y_{N}).\label{eq:QUOTIENTCHNGEDBYSYMMETRY1-2-1}
\end{equation}
\begin{equation}
H(y_{1},y_{2},\ldots,y_{j-1},y_{j},y_{j+1},\ldots,y_{N})=-H(-y_{1},-y_{2},\ldots,-y_{j-1},-y_{j},-y_{j+1},\ldots,-y_{N}).\label{eq:QUOTIENTCHNGEDBYSYMMETRY1-2-1 1}
\end{equation}
Substitute from (\ref{eq:QUOTIENTCHNGEDBYSYMMETRY1-2-1}) into the
right hand side of (\ref{eq:Quotient1-1-1}) to obtain 
\begin{equation}
Q_{j}(y,h)=Q_{j}(-y,-h).\label{eq:RELATIONQ=00003D-Q-1-1}
\end{equation}
Take the limit as $h\rightarrow0$ in (\ref{eq:RELATIONQ=00003D-Q-1-1})
and obtain

\begin{equation}
\varPsi(y):=\frac{\partial H(y)}{\partial y_{j}}=lim_{h\rightarrow0}Q_{j}(y,h)=lim_{h\rightarrow0}Q_{j}(-y,-h)=\varPsi(-y).\label{eq:ODDDERH}
\end{equation}
The proof of iii) follows from the proofs of i) and ii) and the definition
of $f(y)$.
\end{proof}

\section{sample of two successive even derivatives lemma}

A straight forward calculation reveals that
\begin{equation}
\frac{d^{4}y_{j}(t)}{dt^{4}}=\sum_{k_{1}=1}^{n}\frac{\partial f_{j}(y(t))}{\partial y_{k_{1}}}\frac{d^{2}y_{k_{1}}(t)}{dt^{2}}+\sum_{k_{1}=1}^{n}\sum_{k_{2}=1}^{n}\frac{\partial^{2}f_{j}(y(t))}{\partial y_{k_{2}}\partial y_{k_{1}}}\frac{dy_{k_{2}}(t)}{dt}\frac{dy_{k_{1}}(t)}{dt}.\label{eq:4THDERSummar}
\end{equation}

{\small{}
\[
\frac{d^{6}y_{j}(t)}{dt^{6}}=\sum_{k_{1}=1}^{n}\frac{\partial f_{j}(y(t))}{\partial y_{k_{1}}}\frac{d^{4}y_{k_{1}}(t)}{dt^{4}}
\]
\[
+\sum_{k_{1}=1}^{n}\sum_{k_{2}=1}^{n}\frac{\partial^{2}f_{j}(y(t))}{\partial y_{k_{2}}\partial y_{k_{1}}}[3\frac{d^{3}y_{k_{1}}(t)}{dt^{3}}\frac{dy_{k_{2}}(t)}{dt}+\frac{dy_{k_{1}}(t)}{dt}\frac{d^{3}y_{k_{2}}(t)}{dt^{3}}+3\frac{d^{2}y_{k_{1}}(t)}{dt^{2}}\frac{d^{2}y_{k_{2}}(t)}{dt^{2}}]
\]
\[
+\sum_{k_{1}=1}^{n}\sum_{k_{2}=1}^{n}\sum_{k_{3}=1}^{n}\frac{\partial^{3}f_{j}(y(t))}{\partial y_{k_{3}}\partial y_{k_{2}}\partial y_{k_{1}}}[3\frac{d^{2}y_{k_{1}}(t)}{dt^{2}}\frac{dy_{k_{2}}(t)}{dt}\frac{dy_{k_{3}}(t)}{dt}+2\frac{dy_{k_{1}}(t)}{dt}\frac{d^{2}y_{k_{2}}(t)}{dt^{2}}\frac{dy_{k_{3}}(t)}{dt}+\frac{dy_{k_{1}}(t)}{dt}\frac{dy_{k_{2}}(t)}{dt}\frac{d^{2}y_{k_{3}}(t)}{dt^{2}}]
\]
\begin{equation}
+\sum_{k_{1}=1}^{n}\sum_{k_{2}=1}^{n}\sum_{k_{3}=1}^{n}\sum_{k_{4}=1}^{n}\frac{\partial^{4}f_{j}(y(t))}{\partial y_{k_{4}}\partial y_{k_{3}}\partial y_{k_{2}}\partial y_{k_{1}}}\frac{dy_{k_{4}}(t)}{dt}\frac{dy_{k_{3}}(t)}{dt}\frac{dy_{k_{2}}(t)}{dt}\frac{dy_{k_{1}}(t)}{dt}.\label{eq:6THDeryj(t=00003D0)Summary}
\end{equation}
}These aid us in formulating the elements of induction i) , ii) in
theorem 6 below.

\section{induction on even derivatives}

We prove 
\begin{thm}
Assume that: i) $t\in\mathbb{C},\:t_{0}\in\mathbb{R},y_{0},\eta\in\mathbb{\mathbb{\mathbb{R}}}^{3N},y\,,f(y)\in\mathbb{C}^{n},\;n\in\mathbb{N}$,
where $f(y)$ is an analytic function in the vector variable $y$
in a disk such that in $D$
\begin{equation}
D:=\{y\bigl|\left\Vert y-y_{0}\right\Vert \leq b\}\Longrightarrow\left\Vert f(y)\right\Vert \leq M.\label{eq:Boundedf(y)in a disk.-1}
\end{equation}
 Then, the initial value problem

\begin{equation}
y''=f(y),\;y(t_{0})=\overrightarrow{0},y'(t_{0})=\eta,\;\overrightarrow{0}^{T}:=[0,0,\cdots,0],\label{eq:INITIALVALUEPROBGENERAL-1}
\end{equation}
possesses a unique analytic solution $y(t)$ for $\left|t-t_{0}\right|\leq\sqrt{\nicefrac{2b}{M}}$
that satisfies $-y(t-t_{0})\equiv y(-(t-t_{0}))$. Namely, $y^{(m)}(t_{0})=\overrightarrow{0}$
for all even numbers m. 
\end{thm}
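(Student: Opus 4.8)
The plan is to mirror the structure of the proof of Theorem 3 almost verbatim, replacing "odd" by "even" throughout and using the hypothesis $-f(y)=f(-y)$ in place of the implicit structure used there. Without loss of generality take $t_0=0$. Since the initial condition is now $y(0)=\overrightarrow{0}$ and $y'(0)=\eta$, the relevant base cases are the even-order derivatives: from $y''=f(y)$ we get $y''(0)=f(\overrightarrow{0})$; but because $-f(y)=f(-y)$ forces $f(\overrightarrow{0})=-f(\overrightarrow{0})$, i.e. $f(\overrightarrow{0})=\overrightarrow{0}$, so $y''(0)=\overrightarrow{0}$. Similarly, using the formula \eqref{eq:4THDERSummar} for $y^{(4)}(0)$ together with Lemma 4 (each partial derivative $\partial f_j/\partial y_{k_1}$ is an \emph{even} function of $y$ when $f$ is odd, hence need not vanish at $0$, while $\partial^2 f_j/\partial y_{k_2}\partial y_{k_1}$ is \emph{odd}, hence vanishes at $0$) and $y''(0)=\overrightarrow{0}$, one checks $y^{(4)}(0)=\overrightarrow{0}$. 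The key observation to carry forward is that every even-order derivative $y^{(2m)}(0)$ is a sum of terms each of which contains a factor that vanishes at $0$ — either a lower even-order derivative already known to vanish, or a partial derivative of $f$ of odd order (which vanishes at $0$ by Lemma 4 since $f$ is odd), or an odd-order derivative of a component (which need not vanish, but always appears in even multiplicity paired so that\ldots)\,— here I must be careful about exactly what the invariant is.

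The induction hypothesis I would set up, paralleling \eqref{eq:ODDmDERIVATIVE}–\eqref{eq:ANYNUMBEREVENDERIVATIVES}, is that each component $y_j^{(2+m)}(t)$ with $m$ \emph{even} is a finite sum of products $T_m := \dfrac{\partial^l f_j(y(t))}{\partial y_{k_l}\cdots\partial y_{k_1}} J_O J_E$ where now $J_O$ is a product of an \emph{even} number $r\in\mathbb{N}_0$ of odd-order derivatives of components of $y(t)$ (possibly $r=0$, in which case $J_O\equiv 1$), and $J_E$ is a product of any number of even-order derivatives. The point is then a parity bookkeeping: at $t=0$ the even-order derivatives $y_{z}^{(2c)}(0)$ vanish for $c\ge 1$ (induction) but $y_z^{(0)}(0)=y_z(0)=0$ too since $y(0)=\overrightarrow 0$ — so in fact \emph{every} even-order derivative factor, including the $0$-th, vanishes at $0$. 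Combined with: $\partial^l f_j/\partial(\cdots)$ evaluated at $y(0)=\overrightarrow 0$ vanishes whenever $l$ is even (by iterating Lemma 4, part iii), second implication, a derivative of $f$ is odd when $l$ is odd and even when $l$ is even; the even-$l$ ones, being odd functions, vanish at $\overrightarrow 0$). So a term $T_m(0)$ is nonzero only if $l$ is odd \emph{and} $J_E$ is empty ($w=0$) \emph{and} $r$ is even; but $l$ odd forces, through the di~Bruno-type counting of how many derivative-slots are distributed, a parity conflict with "$r$ even and total order even" — that is the inequality I need to nail down to conclude $T_m(0)=\overrightarrow 0$ for $m$ even, i.e. $y^{(2m)}(0)=\overrightarrow 0$ for all $m$.

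Next I would run the inductive step exactly as in Theorem 3: differentiate $T_m$ twice, obtaining $\widehat{T_m^{(2)}}=Q_1+Q_2+Q_3$ with the same three groups \eqref{eq:INDUCTIVE1PART1}–\eqref{eq:INDUCTIVE1PART3}, and verify term-by-term through the list \eqref{eq:ALLTERMS4INDUNTIONm+2} that each resulting product is again of the form "$\partial^s f_j$ times ($\widehat J_O$ a product of an even number of odd-order derivatives) times ($\widehat J_E$ a product of even-order derivatives)". The checks are the mirror image of those in Theorem 3: differentiating an odd-order derivative turns it into an even-order one (moving a factor from $\widehat J_O$ to $\widehat J_E$, which preserves "even number in $\widehat J_O$"), differentiating an even-order factor produces an odd-order one which then must be paired — and the extra $y^{(1)}_{k_{l+1}}$, $y^{(1)}_{k_{l+2}}$ factors produced by differentiating $f$'s argument come in the right parity to keep the count of odd-order factors even. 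Finally, invoke the standard Cauchy majorant / Picard estimate (as cited, \cite{P.F.Hsieh=000026Y. Sibuya-2}, Ch.~1, and \cite{HILLEODESCOMPLEX-1}) to get existence, uniqueness and analyticity of $y(t)$ on $|t|\le\sqrt{2b/M}$, and conclude from $y^{(2m)}(0)=\overrightarrow 0$ that the Taylor series contains only odd powers, hence $-y(t)\equiv y(-t)$.

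The main obstacle I anticipate is precisely the parity accounting at the base of the induction and in confirming $T_m(0)=\overrightarrow 0$: unlike Theorem 3, one cannot simply say "there is at least one odd-order derivative factor and it is a velocity which vanishes" — here the velocity $\eta$ need \emph{not} vanish, so the vanishing at $0$ must be extracted from the combination of (a) all even-order-derivative factors vanishing, including $y(0)=\overrightarrow 0$, and (b) even-order partial derivatives of the odd function $f$ vanishing at $\overrightarrow 0$, and one has to show these two sources cannot \emph{both} be absent in a term whose total derivative-order is even. Making the "$r$ even, $l$ in $T_m$" relation precise — essentially that the number of derivative slots $l$ plus the orders absorbed into $J_O,J_E$ add up with a fixed parity — is the combinatorial heart of the argument and the step most likely to need a careful, explicit lemma (the content foreshadowed by \eqref{eq:6THDeryj(t=00003D0)Summary} in Section~5).
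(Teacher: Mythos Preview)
Your overall plan matches the paper's proof almost exactly: reduce to $t_0=0$, use $f(\overrightarrow 0)=\overrightarrow 0$ and the alternating parity of partial derivatives of an odd $f$ (Lemma~5), set up a Fa\`a di Bruno-type induction on the structure of the terms $T_m$, differentiate twice, and check $Q_1,Q_2,Q_3$ term by term. The base cases and the use of Lemma~5 are just as in the paper.

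The one genuine gap is the induction hypothesis itself. Tracking only ``$J_O$ has an \emph{even} number $r$ of odd-order factors'' is not enough to conclude $T_m(0)=0$; as you yourself note, this leaves open the case $l$ odd and $w=0$, and you defer to an external ``parity conflict'' you have not made precise. The paper closes this gap by building the missing constraint directly into the inductive invariant: in each term $T_m=\dfrac{\partial^l f_j}{\partial y_{k_l}\cdots\partial y_{k_1}}J_OJ_E$ one records not only that $r$ is even but also that \emph{$l$ and $w$ have the same parity} (equivalently, $l$ odd $\Rightarrow w\ge 1$). With this in the hypothesis, $T_m(0)=0$ is immediate: if $l$ is even the partial derivative vanishes at $\overrightarrow 0$ by Lemma~5; if $l$ is odd then $w\ge 1$, so $J_E$ contains at least one even-order derivative factor, and every such factor (including zeroth order, since $y(0)=\overrightarrow 0$) vanishes at $0$. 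The inductive step then consists of checking, for each of the seven product types in $Q_1+Q_2+Q_3$, that the pair $(s,q)$ replacing $(l,w)$ again has matching parity and that the new $\widehat J_O$ again has an even number of odd-order factors --- exactly the bookkeeping you outlined, but now with the $l\leftrightarrow w$ parity tracked explicitly. Your ``di Bruno counting'' intuition is correct (indeed $l=r+w$ is preserved under differentiation, which is equivalent to the paper's constraint), but it must be stated as part of the hypothesis rather than invoked from outside; otherwise the induction does not close.
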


\begin{proof}
We proceed to prove this theorem by induction. The proof uses some
calculations and representations analogous to those in section 3.
However, these come with different interpretations. This is necessitated
by conditions \, i), ii) of theorem 6, that are different than the
analogous conditions \, i) and ii) in theorem 1. In theorem 6, the
assumption $-f(y)=f(-y)$ is the source of the differences. The following
claim clarifies the role of i) in this theorem 6.{\small{} If $f(y)$
is an odd function of $y$ then $f(\overrightarrow{0})=\overrightarrow{0}$.
Moreover, by lemma 5, the even order partial derivatives of $f_{j}(y)$
with respect to the variables $y_{k}$, ( like $f_{j}^{(0)}(y):=f(y)$
) , are odd functions of $y$ . Namely,
\begin{equation}
-\frac{\partial^{l}f_{j}(y(t))}{\partial y_{k_{l}}\ldots\partial y_{k_{2}}\partial y_{k_{1}}}=\frac{\partial^{l}f_{j}(-y(t))}{\partial y_{k_{l}}\ldots\partial y_{k_{2}}\partial y_{k_{1}}},\;l=0,2,4,\ldots,\Longrightarrow\frac{\partial^{l}f_{j}(\overrightarrow{0})}{\partial y_{k_{l}}\ldots\partial y_{k_{2}}\partial y_{k_{1}}}=0.\label{eq:ODDORDERDERfjyEVENORDERDERFjyl=00003D0-1}
\end{equation}
Per lemma 5, the odd order partial derivatives of $f_{j}(y)$ with
respect to $y_{j}$ (unlike $f_{j}^{(0)}(y)$ $=f(y)$) are even functions
of $y$ . Namely,}{\small\par}

\[
\frac{\partial^{l}f_{j}(y(t))}{\partial y_{k_{l}}\ldots\partial y_{k_{2}}\partial y_{k_{1}}}=\frac{\partial^{l}f_{j}(-y(t))}{\partial y_{k_{l}}\ldots\partial y_{k_{2}}\partial y_{k_{1}}},\;l\;is\;odd.
\]

Without loss of generality assume that $t_{0}=0$ since our differential
system is autonomous. Assume that each component $y_{j}^{(m)}$, $j=1,2,\cdots,n$
, is a finite sum of terms of the form
\begin{equation}
T_{m}:=\frac{\partial^{l}f_{j}(y(t))}{\partial y_{k_{l}}\ldots\partial y_{k_{2}}\partial y_{k_{1}}}J_{O}J_{E},\label{eq:EVENDlGREEKJOJE4}
\end{equation}
where: i) $J_{E}$ is a finite number $w$ of \,factors of even order
derivatives of the form $y_{j}^{(2c)}$ . Namely,

\begin{equation}
J_{E}=y_{z_{1}}^{(2c_{1})}(t)y_{z_{2}}^{(2c_{2})}(t)\ldots y_{z_{w}}^{(2c_{w})}(t),\qquad c_{1},c_{2},\ldots,c_{w}\in\mathbb{N}_{0}.\label{eq:ANYNUMBEREVENDERIVATIVES-2}
\end{equation}
If $l\thinspace$is an odd number then $w\geq1$ is an odd number.
If $l\thinspace$is an even number then $w\geq0$ is an even number.
Thus, making $T_{m}(0)=0$ in (\ref{eq:EVENDlGREEKJOJE4}) and consequently
$y^{(m)}(0)=0.$ ii) $J_{O}$ has an even number $r$ of\, factors
of the form $y_{j}^{(2g+1)}$. Namely,

\begin{equation}
J_{O}=y_{s_{1}}^{(2e_{`1}+1)}(t)y_{s_{2}}^{(2e_{2}+1)}(t)\ldots y_{s_{r}}^{(2e_{r}+1)}(t),\qquad e_{1},e_{2},\ldots,e_{r}\in\mathbb{N}_{0}.\label{eq:ODDDERIVATIVES=000026ODDNUMBER-1}
\end{equation}
Then, $y_{j}^{(m+2)}(t)$ is a finite sum of terms of the form
\begin{equation}
\widehat{T_{m}^{(2)}}:=\frac{\partial^{s}f_{j}(y(t))}{\partial y_{k_{s}}\ldots\partial y_{k_{2}}\partial y_{k_{1}}}\widehat{J}_{O}\widehat{J}_{E},\label{eq:EVENrDERIVATIVE+2-1}
\end{equation}
where: i) $\widehat{J}_{E}$ is a finite product of a number $q$
of \,even order derivatives of components of $y(t)$ . Namely,
\begin{equation}
\widehat{J}_{E}=y_{z_{1}}^{(2c_{1})}(t)y_{z_{2}}^{(2c_{2})}(t)\ldots y_{z_{q}}^{(2c_{q})}(t),\qquad c_{1},c_{2},\ldots,c_{q},q\in\mathbb{N}_{0}.\label{eq:ANYNUMBEREVENDERIVATIVESEven}
\end{equation}
If $s$ is an odd number then $w\geq1$ is an odd number. If $s$
is an even number then $w\geq0$ is an even number. Thus, making $\widehat{T_{m}^{(2)}}(0)=0$
and consequently $y^{(m+2)}(0)=0.$ ii) $\widehat{J}_{O}$ is a finite
product of an even number $p\in\mathbb{N}$ of odd order derivatives
of components of $y(t)$. Namely,
\begin{equation}
\widehat{J}_{O}=y_{s_{1}}^{(2g_{1}+1)}(t)y_{s_{2}}^{(2g_{+1})}(t)\ldots y_{s_{p}}^{(2g_{p}+1)}(t)\quad g_{1},g_{2},\ldots,g_{p}\in\mathbb{N}_{0}.\label{eq:ODDDERIVATIVES=000026ODDNUMhat-1}
\end{equation}

A calculation of the second derivative $T_{m}^{(2)}(t)$ shows that
\begin{equation}
T_{m}^{(2)}(t)=Q_{1}+Q_{2}+Q_{3},\label{eq:Q1+Q2+Q3EVEN}
\end{equation}
where

\[
Q_{1}:=\sum_{k_{l+2}=1}^{n}\sum_{k_{l+1}=1}^{n}\{[\frac{\partial^{l+2}f_{j}(y(t))}{\partial y_{k_{l+2}}\partial y_{k_{l+1}}\partial y_{k_{l}}\ldots\partial y_{k_{2}}\partial y_{k_{1}}}]\}y_{k_{l+2}}^{(1)}y_{k_{l+1}}^{(1)}J_{O}J_{E}
\]
\begin{equation}
+\sum_{k_{l+1}=1}^{n}[\frac{\partial^{l+1}f_{j}(y(t))}{\partial y_{k_{l+1}}\partial y_{k_{l}}\ldots\partial y_{k_{2}}\partial y_{k_{1}}}]y_{k_{l+1}}^{(2)}J_{O}J_{E},\label{eq:INDUCTIVE1PART1-1}
\end{equation}
\begin{equation}
Q_{2}:=2\sum_{k_{l+1}=1}^{n}[\frac{\partial^{l+1}f_{j}(y(t))}{\partial y_{k_{l+1}}\partial y_{k_{l}}\ldots\partial y_{k_{2}}\partial y_{k_{1}}}]y_{k_{l+1}}^{(1)}[J_{O}^{(1)}J_{E}+J_{O}J_{E}^{(1)}],\label{eq:INDUCTIVE1PART2-1}
\end{equation}
\begin{equation}
Q_{3}:=\frac{\partial^{l}f_{j}(y(t))}{\partial y_{k_{l}}\ldots\partial y_{k_{2}}\partial y_{k_{1}}}[J_{O}J_{E}]^{(2)}=\frac{\partial^{l}f_{j}(y(t))}{\partial y_{k_{l}}\ldots\partial y_{k_{2}}\partial y_{k_{1}}}[J_{O}^{(2)}J_{E}+2J_{O}^{(1)}J_{E}^{(1)}+J_{O}J_{E}^{(2)}].\label{eq:INDUCTIVE1PART3-1}
\end{equation}
Below is the list of the different types of products of the form $\widehat{T_{m}^{(2)}}$
that occur in (\ref{eq:Q1+Q2+Q3EVEN}) . The products are:
\begin{equation}
\{[\frac{\partial^{l+2}f_{j}(y(t))}{\partial y_{k_{l+2}}\partial y_{k_{l+1}}\partial y_{k_{l}}\ldots\partial y_{k_{2}}\partial y_{k_{1}}}]\}y_{k_{l+2}}^{(1)}y_{k_{l+1}}^{(1)}J_{O}J_{E};\quad[\frac{\partial^{l+1}f_{j}(y(t))}{\partial y_{k_{l+1}}\partial y_{k_{l}}\ldots\partial y_{k_{2}}\partial y_{k_{1}}}]y_{k_{l+1}}^{(2)}J_{O}J_{E};\label{eq:Tm2DERHATQ1}
\end{equation}
\begin{equation}
[\frac{\partial^{l+1}f_{j}(y(t))}{\partial y_{k_{l+1}}\partial y_{k_{l}}\ldots\partial y_{k_{2}}\partial y_{k_{1}}}]y_{k_{l+1}}^{(1)}J_{O}^{(1)}J_{E};\quad[\frac{\partial^{l+1}f_{j}(y(t))}{\partial y_{k_{l+1}}\partial y_{k_{l}}\ldots\partial y_{k_{2}}\partial y_{k_{1}}}]y_{k_{l+1}}^{(1)}J_{O}J_{E}^{(1)};\label{eq:Tm2DERHATQ2}
\end{equation}
\begin{equation}
[\frac{\partial^{l}f_{j}(y(t))}{\partial y_{k_{l}}\ldots\partial y_{k_{2}}\partial y_{k_{1}}}]J_{O}^{(2)}J_{E};\quad[\frac{\partial^{l}f_{j}(y(t))}{\partial y_{k_{l}}\ldots\partial y_{k_{2}}\partial y_{k_{1}}}]J_{O}^{(1)}J_{E}^{(1)};\quad[\frac{\partial^{l}f_{j}(y(t))}{\partial y_{k_{l}}\ldots\partial y_{k_{2}}\partial y_{k_{1}}}]J_{O}J_{E}^{(2)};\label{eq:Tm2HATQ3}
\end{equation}
 Now we proceed to show that each product in (\ref{eq:Q1+Q2+Q3EVEN})
is of the desired form . Consider first each term in (\ref{eq:Tm2DERHATQ1})
and start with 
\begin{equation}
Q_{11}:=[\frac{\partial^{l+2}f_{j}(y(t))}{\partial y_{k_{l+2}}\partial y_{k_{l+1}}\partial y_{k_{l}}\ldots\partial y_{k_{2}}\partial y_{k_{1}}}]\}y_{k_{l+2}}^{(1)}y_{k_{l+1}}^{(1)}J_{O}J_{E}.\label{eq:Q11}
\end{equation}
Put,
\begin{equation}
\widehat{J}_{O}:=y_{k_{l+2}}^{(1)}y_{k_{l+1}}^{(1)}J_{O},\quad\widehat{J}_{E}:=J_{E}.\label{eq:NEWODDEVENO1=000026O2-1}
\end{equation}
Observe that $J_{O}$ has an even number of factors of odd order derivatives
that is $r\geq0$ . Consequently, $\widehat{J}_{O}$ has an even number
of odd order derivatives of components of $y(t)$ that is $(r+2)$
as required by ii). Assume that $l\thinspace$ is an odd number then
$s=l+2$ is also an odd number. Assume that $l$ is an even number
then $s=l+2$ is also an even number. Since $\widehat{J}_{E}:=J_{E}$
then all conditions of i) are satisfied. And hence all conditions
in (\ref{eq:EVENrDERIVATIVE+2-1}) holds. Then we have as desired

\begin{equation}
[\frac{\partial^{l}f_{j}(y(0))}{\partial y_{k_{l+2}}\partial y_{k_{l+1}}\partial y_{k_{l}}\ldots\partial y_{k_{2}}\partial y_{k_{1}}}]J_{O}(0)J_{E}(0)=0\Longrightarrow[\frac{\partial^{l+2}f_{j}(y(0))}{\partial y_{k_{l+2}}\partial y_{k_{l+1}}\partial y_{k_{l}}\ldots\partial y_{k_{2}}\partial y_{k_{1}}}]\widehat{J}_{O}(0)\widehat{J}_{E}(0)=0.\label{eq:ZERO101}
\end{equation}
Next we focus on the second representative product in (\ref{eq:Tm2DERHATQ1})
that is 
\begin{equation}
Q_{21}:=[\frac{\partial^{l+1}f_{j}(y(t))}{\partial y_{k_{l+1}}\partial y_{k_{l}}\ldots\partial y_{k_{2}}\partial y_{k_{1}}}]y_{k_{l+1}}^{(2)}J_{O}J_{E}.\label{eq:Q21}
\end{equation}
 Put 
\begin{equation}
\widehat{J}_{O}:=J_{O},\quad\widehat{J}_{E}:=y_{k_{l+1}}^{(2)}J_{E}.\label{eq:2NDTERMINDERAGAINAGAIN-1}
\end{equation}
If $w$ is the number of factors of even order derivatives in $J_{E}$
then the number of even order derivatives in $\;\widehat{J}_{E}$
is $w+1$. Observe that $s=l+1$. Thus condition i) in (\ref{eq:EVENrDERIVATIVE+2-1})
holds. Next we put 
\begin{equation}
Q_{12}:=[\frac{\partial^{l+1}f_{j}(y(t))}{\partial y_{k_{l+1}}\partial y_{k_{l}}\ldots\partial y_{k_{2}}\partial y_{k_{1}}}]y_{k_{l+1}}^{(1)}J_{O}^{(1)}J_{E}.\label{eq:Q12}
\end{equation}
There is an even number $r$ of factors of odd derivatives in $J_{O}$
that make $J_{O}^{(1)}$ a sum of $r$ products as follows.

\begin{equation}
J_{O}^{(1)}=y_{s_{1}}^{(2e_{1}+2)}y_{s_{2}}^{(2e_{2}+1)}\ldots y_{s_{r}}^{(2e_{r}+1)}+y_{s_{1}}^{(2e_{1}+1)}y_{s_{2}}^{(2e_{2}+2)}\ldots y_{s_{r}}^{(2e_{r}+1)}+\ldots+y_{s_{1}}^{(2e_{1}+1)}y_{s_{2}}^{(2e_{2}+2)}\ldots y_{s_{r}}^{(2e_{r}+2)}.\label{eq:JODERIVATive-1}
\end{equation}
Each summand in (\ref{eq:JODERIVATive-1}) is a product of $(r-1)$
odd order derivatives and precisely one factor is an even order derivative
of a certain component of $y$. Without loss of generality we relabel
each product in (\ref{eq:JODERIVATive-1}) as 
\begin{equation}
J_{OS}^{(1)}:=y_{s_{1}}^{(2u_{1}+1)}y_{s_{2}}^{(2u_{2}+1)}\ldots y_{s_{r-1}}^{(2u_{r-1}+1)}y_{s_{r}}^{(2u_{r}+2)},\label{eq:JOSDER1-1}
\end{equation}
and put 
\begin{equation}
\widehat{J}_{O}=y_{k_{l+1}}^{(1)}y_{s_{1}}^{(2u_{1}+1)}y_{s_{2}}^{(2u_{2}+1)}\ldots y_{s_{r-1}}^{(2u_{r-1}+1)},\:\widehat{J}_{E}=y_{s_{r}}^{(2u_{r}+2)}J_{E}.\label{eq:JOHATJEHAT-1}
\end{equation}
Evidently, $\widehat{J}_{O}$ , like $J_{O}$ , has the same even
number $r$ of factors of odd order derivatives of components of $y$
. Observe that $\widehat{J}_{E}$ has $(w+1)$ number of factors of
even order derivatives that is one more than $J_{E}$. However, $s=l+1$.
Therefore, $Q_{12}$ is of the desired form (\ref{eq:EVENrDERIVATIVE+2-1})
implying $Q_{12}(0)=0$. Consider now the second term in (\ref{eq:Tm2DERHATQ2}).
Put
\begin{equation}
Q_{22}:=[\frac{\partial^{l+1}f_{j}(y(t))}{\partial y_{k_{l+1}}\partial y_{k_{l}}\ldots\partial y_{k_{2}}\partial y_{k_{1}}}]y_{k_{l+1}}^{(1)}J_{O}J_{E}^{(1)}.\label{eq:Q22}
\end{equation}
First we scrutinize the expression $J_{E}^{(1)}$ . If $J_{E}\equiv1$
then $Q_{22}\equiv0$ and trivially $Q_{22}(0)=0$ as desired. If
$w\geq1$ then $J_{E}^{(1)}$ is the sum of $w$ products as follows

\[
J_{E}^{(1)}=y_{z_{1}}^{(2c_{1}+1)}(t)y_{z_{2}}^{(2c_{2})}(t)\ldots y_{z_{w}}^{(2c_{w})}(t)+y_{z_{1}}^{(2c_{1})}(t)y_{z_{2}}^{(2c_{2}+1)}(t)\ldots y_{z_{w}}^{(2c_{w})}(t)
\]
\begin{equation}
+\ldots+y_{z_{1}}^{(2c_{1})}(t)y_{z_{2}}^{(2c_{2})}(t)\ldots y_{z_{w}}^{(2c_{w}+1)}(t),\qquad c_{1},c_{2},\ldots,c_{w},w\in\mathbb{N}.\label{eq:JEDER1storder-1}
\end{equation}
Without loss of generality assume that a representative product in
(\ref{eq:JEDER1storder-1}) has the form 
\begin{equation}
F_{22}:=y_{z_{1}}^{(2c_{1}+1)}(t)y_{z_{2}}^{(2c_{2})}(t)\ldots y_{z_{w}}^{(2c_{w})}(t).\label{eq:F22}
\end{equation}
Combine $y_{k_{l+1}}^{(1)}$ from (\ref{eq:Q22}) with the factor
$y_{z_{1}}^{(2c_{1}+1)}$ in (\ref{eq:F22}) and put 
\begin{equation}
\widehat{J}_{O}:=y_{k_{l+1}}^{(1)}y_{z_{1}}^{(2c_{1}+1)}J_{O},\;\widehat{J}_{E}:=y_{z_{2}}^{(2c_{2})}(t)\ldots y_{z_{w}}^{(2c_{w})}(t).\label{eq:2NDTERMDERGREEKJODERJE-1}
\end{equation}
Evidently, $\widehat{J}_{O}$ has an even number $(r+2)$ of odd order
derivatives. However, the number of \,factors in $\widehat{J}_{E}$
is now $q=w-1$. Recall, that if $l$ is an even number then $s=l+1$
is an odd number. If $l$ is an odd number then\, $s=l+1$ is an
even number. Therefore, conditions i) and ii) in (\ref{eq:EVENrDERIVATIVE+2-1})
hold. Hence (\ref{eq:EVENrDERIVATIVE+2-1}) is of the desired form
and $\widehat{T_{m}^{2}}(0)=0$. We are left to analyze the terms
in (\ref{eq:Tm2HATQ3}). We first focus on a representative product
in the middle term of (\ref{eq:Tm2HATQ3}) that is in (\ref{eq:Q23})
below
\begin{equation}
Q_{23}:=\frac{\partial^{l}f_{j}(y(t))}{\partial y_{k_{l}}\ldots\partial y_{k_{2}}\partial y_{k_{1}}}J_{O}^{(1)}J_{E}^{(1)}.\label{eq:Q23}
\end{equation}
We recall the forms $J_{O}^{(1)},\;J_{E}^{(1)}$ in (\ref{eq:JODERIVATive-1})
and (\ref{eq:JEDER1storder-1}) respectively. There are $rw$ terms
in $J_{O}^{(1)}J_{E}^{(1)}$ . Each term is a products of $rw$ factors.
We focus on each product. Assume that $l$ is even. If $J_{E}\equiv1$
or $w=0$ then 
\begin{equation}
J_{O}^{(1)}J_{E}^{(1)}\equiv0\Longrightarrow Q_{23}\equiv0\Longrightarrow\frac{\partial^{l}f_{j}(y(0))}{\partial y_{k_{l}}\ldots\partial y_{k_{2}}\partial y_{k_{1}}}J_{O}^{(1)}(0)J_{E}^{(1)}(0)=0,\label{eq:Q23=00003D0}
\end{equation}
Notice that $s=l$. Therefore, we may assume without loss of generality
that a representaive of one of these $rw$ products has the form

\begin{equation}
J_{OS}J_{ES}:=y_{s_{1}}^{(2u_{1}+1)}y_{s_{2}}^{(2u_{2}+1)}\ldots y_{s_{r-1}}^{(2u_{r-1}+1)}y_{s_{r}}^{(2u_{r}+2)}y_{z_{1}}^{(2c_{1}+1)}(t)y_{z_{2}}^{(2c_{2})}(t)\ldots y_{z_{w}}^{(2c_{w})}(t).\label{eq:WLOGPRODUCTJODERJEDER-1}
\end{equation}
Put 
\begin{equation}
\widehat{J}_{O}:=y_{s_{1}}^{(2u_{1}+1)}y_{s_{2}}^{(2u_{2}+1)}\ldots y_{s_{r-1}}^{(2u_{r-1}+1)}y_{z_{1}}^{(2c_{1}+1)}(t),\;\widehat{J}_{E}:=y_{s_{r}}^{(2u_{r}+2)}(t)y_{z_{2}}^{(2c_{2})}(t)\ldots y_{z_{w}}^{(2c_{w})}(t).\label{eq:SEPARATEDPRODUCTWLOGJOJEhats-1}
\end{equation}
One can easily verify that $\widehat{J}_{O}$ and $\widehat{J}_{E}$
have the same number of factors as $J_{O}$ and $J_{E}$ respectively.
Since $s=l$, $\widehat{J}_{O}$ and $\widehat{J}_{E}$ in (\ref{eq:SEPARATEDPRODUCTWLOGJOJEhats-1})
are of the desired form (\ref{eq:EVENrDERIVATIVE+2-1}). We are left
to analyze the representative products in the first and third term
in (\ref{eq:Tm2HATQ3}). They are
\begin{equation}
Q_{13}:=\frac{\partial^{l}f_{j}(y(t))}{\partial y_{k_{l}}\ldots\partial y_{k_{2}}\partial y_{k_{1}}}J_{O}^{(2)}J_{E};\quad Q_{33}:=\frac{\partial^{l}f_{j}(y(t))}{\partial y_{k_{l}}\ldots\partial y_{k_{2}}\partial y_{k_{1}}}J_{O}J_{E}^{(2)};\label{eq:Q13=000026Q33}
\end{equation}
To this end we scrutinize the factors that make up the products in
$J_{O}J_{E}^{(2)}$ and in $J_{O}^{(2)}J_{E}$. We start with $J_{O}J_{E}^{(2)}$
. If $w=0$ we have $Q_{33}\equiv0$. If $w=1$ then $s=l$ where
$s$ and $l$ are both odd numbers and we put 
\begin{equation}
\widehat{J}_{O}=J_{O},\quad\widehat{J}_{E}=J_{E}^{(2)}=y_{z_{1}}^{(2c_{1}+2)}.\label{eq:JOJEHATSAGAINQ33}
\end{equation}
 Evidently, conditions i) and ii) are satisfied in (\ref{eq:EVENrDERIVATIVE+2-1})
. Assume that $w\geq2$. Then $J_{E}^{(2)}$ is a sum of products
that are of two kinds. The first kind is

\begin{equation}
y_{z_{j}}^{(2c_{j}+2)}\prod_{l\neq j,l=1}^{w}y_{z_{l}}^{(2c_{l})}.\label{eq:JOJODer2-1}
\end{equation}
Then we put
\begin{equation}
\widehat{J}_{O}=J_{O},\;\widehat{J}_{E}=y_{z_{j}}^{(2c_{j}+2)}\prod_{l\neq j,l=1}^{w}y_{z_{l}}^{(2c_{l})}.\label{eq:JODer2JEALPHA=00003D2-1-2}
\end{equation}
 We have again that $\widehat{J}_{O}$ and $\widehat{J}_{E}$ have
the same number of factors as $J_{O}$ and $J_{E}$ respectively.
Moreover, $s=l$. Thus, conditions i) and ii) are satisfied in (\ref{eq:EVENrDERIVATIVE+2-1}).
The second kind of products is

\begin{equation}
J_{E}^{(2)}=y_{z_{j}}^{(2c_{j}+1)}y_{z_{k}}^{(2c_{k}+1)}\prod_{l\neq j,k,l=1}^{w}y_{z_{l}}^{(2c_{l})},\quad\prod_{l\neq j,k,l=1}^{w}y_{z_{l}}^{(2c_{l})}:\equiv1\quad if\quad w=2.\label{eq:JOAlphaj=00003D1Alphak=00003D1-2-1-2}
\end{equation}
Put

\begin{equation}
\widehat{J}_{O}:=y_{z_{j}}^{(2c_{j}+1)}y_{z_{k}}^{(2c_{k}+1)}J_{O},\quad\widehat{J}_{E}:=\prod_{l\neq j,k,l=1}^{w}y_{z_{l}}^{(2c_{l})}.\label{eq:JEAlphaj,k=00003D1andw>2-2-2}
\end{equation}
Evidently, with s$=l$ the parity of $s,l,$ and the number of factors
in $\widehat{J}_{E}$ that is $(w-2)$ is the same. If $l,w$ are
both even numbers so are $s,w-2$. If $l,w$ are both odd numbers
so are $s,w-2$. The number of factors in $J_{O}$ is $r$ and the
number of factors in $\widehat{J}_{O}$ is $(r+2)$ as needed. Thus,
conditions i) and ii) are satisfied in (\ref{eq:EVENrDERIVATIVE+2-1}).
We are left to conclude the inductive process by focusing on the first
term $Q_{13}$ in (\ref{eq:Q13=000026Q33}). To this end we focus
on the two kinds of products that emanate in the second derivative
$J_{O}^{(2)}$ . $J_{O}$ has the form 
\begin{equation}
J_{O}=y_{s_{1}}^{(2e_{1}+1)}(t)y_{s_{2}}^{(2e_{2}+1)}(t)\ldots y_{s_{r}}^{(2e_{r}+1)}(t),\qquad e_{1},e_{2},\ldots,e_{r}\in\mathbb{N}_{0}.\label{eq:JOAgain-1-2-1}
\end{equation}
The first kind of second derivatives $J_{O}^{(2)}$ will contain the
following $r$ products
\[
[y_{s_{1}}^{(2e_{1}+3)}(t)y_{s_{2}}^{(2e_{2}+1)}(t)\ldots y_{s_{r}}^{(2e_{r}+1)}(t)],\;[y_{s_{1}}^{(2e_{1}+1)}(t)y_{s_{2}}^{(2e_{2}+3)}(t)\ldots y_{s_{r}}^{(2e_{r}+1)}(t)]
\]
\begin{equation}
,\ldots,[y_{s_{1}}^{(2e_{1}+1)}y_{s_{2}}^{(2e_{2}+1)}(t)\ldots y_{s_{j}}^{(2e_{j}+3)}\ldots y_{s_{r}}^{(2e_{r}+1)}(t)]\ldots[y_{s_{1}}^{(2e_{1}+1)}(t)y_{s_{2}}^{(2e_{2}+1)}(t)\ldots y_{s_{r}}^{(2e_{r}+3)}(t)].\label{eq:JODer2Leibnitzgeneral-3-1}
\end{equation}
Consequently, a representative product in $J_{O}^{(2)}J_{E}$ will
have the form

\[
[y_{s_{1}}^{(2e_{1}+1)}y_{s_{2}}^{(2e_{2}+1)}(t)\ldots y_{s_{j}}^{(2e_{j}+3)}\ldots y_{s_{r}}^{(2e_{r}+1)}(t)]y_{k_{1}}^{(2c_{1})}(t)y_{k_{2}}^{(2c_{2})}(t)\ldots y_{k_{w}}^{(2c_{w})}(t).
\]
Choose 
\begin{equation}
\widehat{J}_{O}=[y_{s_{1}}^{(2e_{1}+1)}y_{s_{2}}^{(2e_{2}+1)}(t)\ldots y_{s_{j}}^{(2e_{j}+3)}\ldots y_{s_{r}}^{(2e_{r}+1)}(t)],\;\widehat{J}_{E}=J_{E}.\label{eq:JODer2JEALPHA=00003D2-2-1}
\end{equation}
Evidently, $\widehat{J}_{O}$ and $\;\widehat{J}_{E}$ have the same
number of factors as $J_{O}$ and $J_{E}$ respectively. Since $s=l$,
$\widehat{J}_{O}$ and $\widehat{J}_{E}$ are of the desired form
(\ref{eq:EVENrDERIVATIVE+2-1}). One may assume without loss of generality
that the second kind of products in $J_{O}^{(2)}$ is 
\begin{equation}
y_{s_{1}}^{(2e_{1}+2)}(t)y_{s_{2}}^{(2e_{2}+2)}(t)y_{s_{3}}^{(2e_{1}+1)}\ldots y_{s_{r}}^{(2e_{r}+1)}(t).\label{eq:JOSECONDder1}
\end{equation}
In general we will have a product of the form 
\begin{equation}
y_{s_{j}}^{(2e_{j}+2)}y_{s_{k}}^{(2e_{k}+2)}\quad if\quad r=2,\quad y_{s_{j}}^{(2e_{j}+2)}y_{s_{k}}^{(2e_{k}+2)}\prod_{l\neq j,k}^{r}y_{s_{l}}^{(2e_{l}+1)},\quad l=1,2,\ldots r,\;r\geq3.\label{eq:JOAlphaj=00003D1Alphak=00003D1-3-1}
\end{equation}
Then put:
\begin{equation}
\widehat{J}_{O}\equiv1,\quad\widehat{J}_{E}=y_{s_{j}}^{(2e_{j}+2)}y_{s_{k}}^{(2e_{k}+2)}J_{E},\quad if\quad r=2,\label{eq:JOAlphaj,k=00003D1,1r=00003D2-2-1}
\end{equation}

\begin{equation}
\widehat{J}_{O}:=\prod_{l\neq j,k,l=1}^{r}y_{s_{l}}^{(2e_{l}+1)},\quad\widehat{J}_{E}:=y_{s_{j}}^{(2e_{j}+2)}y_{s_{k}}^{(2e_{k}+2)}J_{E},\quad l=1,2,\ldots r,\;r\geq3.\label{eq:JOAlphaj,k=00003D1andr>2-2-1}
\end{equation}
It is readily observed that $\widehat{J}_{O},\widehat{J}_{E}$, are
of the desired form (\ref{eq:EVENrDERIVATIVE+2-1}). Hence, all $T_{m}(0)=0$
imply that all $\widehat{T_{m}^{(2)}}(0)=0$. Consequently, $y^{(m)}(0)=\overrightarrow{0}$
for all even numbers $m$$\in\mathbb{N}_{0}$ . Q.E.D.
\end{proof}

\end{document}